\newtheorem{Def}{Definition}
\newtheorem{Thm}{Theorem}
\begin{document}

\begin{frontmatter}

%% Title, authors and addresses

\title{Local Differential Privacy for Tensors in Distributed Computing Systems}

\author[1]{Yachao Yuan}
\ead{chao910904@suda.edu.cn}

\author[2]{Xiao Tang}
\ead{serenade\_b612@163.com}
\ead{Yachao Yuan and Xiao Tang have equal contributions.}

\author[3]{Yu Huang}
\ead{fatmo@nuaa.edu.cn}

\author[1]{Yingwen Wu*}
\ead{ywwu@suda.edu.cn}
\ead{Co-corresponding author}

\author[1]{Jin Wang*}
\ead{ustc\_wangjin@hotmail.com}
\ead{Co-corresponding author}

%% Author affiliation
\affiliation[1]{organization={School of Future Science and Engineering, Soochow University},%Department and Organization
            %addressline={}, 
            city={Suzhou},
            postcode={215000}, 
            state={Jiangsu},
            country={China}}
\affiliation[2]{organization={School of Mathematics, Southeast University},%Department and Organization
            %addressline={}, 
            city={Nanjing},
            postcode={210000}, 
            state={Jiangsu},
            country={China}}
\affiliation[3]{organization={School of Cyber Science and Engineering, Southeast University},%Department and Organization
            %addressline={}, 
            city={Nanjing},
            postcode={210000}, 
            state={Jiangsu},
            country={China}}

%% Abstract
\begin{abstract}
%% Text of abstract
Tensor-valued data, increasingly common in distributed big data applications like autonomous driving and smart healthcare, poses unique challenges for privacy protection due to its multidimensional structure and the risk of losing critical structural information. Traditional local differential privacy methods, designed for scalars and matrices, are insufficient for tensors, as they fail to preserve essential relationships among tensor elements. 
We propose TLDP, a novel \emph{LDP} algorithm for \emph{T}ensors that randomly decides whether to inject noise into each component with a certain probability. Such randomness substitutes for part of the perturbation noise and safeguards the raw data more effectively under the condition of guaranteeing differential privacy.
To strike a better balance between utility and privacy, we incorporate a weight matrix that selectively protects sensitive regions. Both theoretical analysis and empirical findings from real-world datasets show that TLDP achieves superior utility while preserving privacy, making it a robust solution for high-dimensional tensor data. 
\end{abstract}

%Graphical abstract
% \begin{graphicalabstract}

% \end{graphicalabstract}

% %Research highlights
% \begin{highlights}
% \item Research highlight 1
% \item Research highlight 2
% \end{highlights}

%% Keywords
\begin{keyword}
Local differential privacy\sep tensor privacy\sep distributed computing systems
\end{keyword}

\end{frontmatter}

\section{Introduction}
High-dimensional tensor data are passionately required in a diverse array of distributed big data applications, such as autonomous driving \cite{10907793}, smart healthcare \cite{WANG2025103084}, and face attribute recognition technology \cite{zhang2025privacy}. In these systems, tensor data that contains sensitive private information, such as location data, facial images, or medical records, is frequently transmitted between different parties, which creates a significant risk of privacy leakage. For example, through the intercepted tensor-valued features or model parameters (in federated systems), an attacker can easily infer whether a data type was used for training the model. Typically, Local Differential Privacy (LDP) constrains an attacker's ability to infer details about individual data from the shared data.

The conventional approach to dealing with tensors involves treating them either as a set of scalars or as a group of vectors to ensure local differential privacy.
However, unlike scalar or vector data, tensor data frequently maintains the inherent structure of data, such as multimedia content and graph-based information. For instance, video data contains correlated sequential frames, while graph data consists of critical structural and relational information. Critical structural information of the tensor-valued data and interrelationships among elements could be lost by flattening them into plain vectors or treating them as scalars when protecting privacy. For example, a time-series tensor like videos in computer vision tasks or financial data in stock market analysis applications contain crucial instructional information, and flattening them goes against common sense; or consider a symmetric tensor like Diffusion Tensor Imaging (DTI) in medical imaging or the polarizability tensor in chemistry, if each element is perturbed using traditional methods, such as adding independent and identically distributed noise, the probability that the tensor still remains symmetric is extremely low. 
In this study, we establish a formal definition of LDP for tensors and develop approaches to satisfy this definition.

Developing schemes for tensors to achieve local differential privacy is a very challenging task due to the complexity of high-dimensional data. 
Generally, the noise introduced to the raw data is deeply affected by the sensitivity of the query function, one key factor of which is the size of the tensor. Essentially, the larger the specifications of the data tensors, the more substantial the noise tends to be, potentially leading to a decrease in the accuracy of the results.
Extensive research explored LDP schemes for scalars~\cite{warner1965randomized,rappor,fanti2015building}, vectors~\cite{10726610,qiu2023fast}, and matrices~\cite{zheng2022matrix,individual} by either adding random noise or randomly perturbing individual inputs to achieve local differential privacy and achieved promising results, but an LDP approach applicable to tensor-valued data in distributed computing systems has yet to be established. 
Although a few research studies like~\cite{LoPub} introduce an LDP for tensors for distribution estimation, they cannot be directly applied for locally differential privacy protection in distributed computing systems. It is because although the perturbed tensor-valued data still retain certain statistical properties, the model analysis results on such data will be seriously affected. Therefore, in this work, we aim to tackle this challenge by developing a local differential privacy mechanism that ensures the effectiveness of the subsequent data analytics despite the high dimensionality of tensors.

Both privacy and utility play a critical role when practitioners apply LDP in practical distributed computing systems. However, designing an LDP that balances the privacy and utility of data presents a significant challenge due to the inherent trade-off between the two. This issue is even more pronounced with tensors, where excessive noise or other perturbations are usually added due to the high dimensionality of tensors, which can severely diminish their usefulness. 
To mitigate such degradation, various methods have been explored, as seen in~\cite{sun2021privbv,bao2021privacy,zhang2024ai,liu2025federated}. 
However, most of them are for scalars or vectors instead of tensors, lacking a utility guarantee for tensors in distributed computing systems.
In contrast, our approach ensures practical utility while satisfying $\epsilon-$LDP in distributed computing systems.

In this paper, we formalize the definition of LDP for tensor-valued data and innovatively introduce a \emph{L}ocal \emph{D}ifferential \emph{P}rivacy algorithm designed for \emph{T}ensors of varying dimensions named TLDP, being able to preserve the tensor-valued data's structural information. 
TLDP incorporates locally differentially private and tensor-shaped noise into the tensors by randomly selecting components of the tensor with a randomized response mechanism to accomplish local differential privacy. The key idea behind it is to randomly choose whether or not to add Laplace/Gaussian noise to the tensor data so that the total amount of noise added to the tensors is lower than directly adding noise to each entity of the tensor without breaking the condition of satisfying $\epsilon-$LDP. We rigorously prove that TLDP satisfies local differential privacy both when using the Laplace/Gaussian noise and has a lower expected error than previous studies.

Additionally, it is easy to observe that different parts of a tensor may have varying sensitivities in practical applications. For example, in a road scene image, the facial or license plate number information is more critical than the background for scene understanding tasks in autonomous driving applications. The data privacy of such tasks can be further improved by incorporating a precisely tailored weight matrix that indicates the areas of the tensor that are more important than others. Therefore, we design an optimized local differential privacy mechanism based on TLDP by applying varying weights to different parts of the tensor.

The primary contributions of this work include:
\begin{itemize}
\item We propose an algorithm that ensures $\epsilon-$LDP for tensor-valued data named TLDP. TLDP has a lower expected error and greater utility compared to prior methods, a property independent of the tensor's shape. 
To the best of our knowledge, it is the first work that addresses local differential privacy of tensor-valued data in machine learning-based distributed computing systems.
\item We further enhance the algorithm by incorporating a weight matrix, which selectively improves the impact of perturbation on important portions of the data, thereby strengthening privacy protection. 
\item We provide rigorous theoretical proof of privacy under the TLDP mechanisms and present a series of experiments performed on various tensor-valued models and datasets. The results demonstrate that our method achieves superior utility compared to other mechanisms while maintaining the same level of privacy.
\end{itemize}

\section{Related Work} \label{sec:RelatedWork}
Our research aligns closely with studies in the following areas.
\subsection{Differential Privacy for Scalar Data}
% Primitive algorithms refer to those that protect data privacy by randomly adding random noise to individual data. They ensure privacy independently and do not rely on any external mechanisms.
\subsubsection{Differential privacy by adding random noise}
 %They include the Gaussian mechanism~\cite{dwork2014algorithmic}, Laplace mechanism~\cite{dwork2006calibrating}, Matrix Variate Gaussian (MVG)~\cite{chanyaswad2018mvg}, and Independent Directional Noise (IDN)~\cite{direction}.

In differential privacy, the Gaussian mechanism \cite{dwork2014algorithmic} and the Laplace mechanism \cite{dwork2006calibrating} are frequently employed as privacy-preserving mechanisms. Although they do not directly apply to tensor-valued queries, our research is nevertheless grounded in the fundamental principle of preserving privacy through the injection of noise.
The Gaussian mechanism introduced independent and identically distributed (i.i.d.) Gaussian noise, calibrated according to the $l_2$-sensitivity of the query, guarantees $(\epsilon, \delta)$-differential privacy. Similarly, by incorporating noise derived from the Laplace distribution, scaled according to the query function's $l_1$-sensitivity, the Laplace mechanism enforced strong $\epsilon$-differential privacy. 

Beyond these fundamental mechanisms, several sophisticated approaches have also been introduced.
Mechanisms that derive their privacy guarantees from the above mechanisms are well-established, as exemplified by the composition schemes referenced in~\cite{kairouz2014extremal,lee2018concentrated,yu2019differentially,lecuyer2019privacy}. Among these, Kairouz et al.~\cite{kairouz2014extremal} proposed an optimal composition scheme that is applicable to a broader range of noise distributions. The studies in~\cite{lee2018concentrated,yu2019differentially} provided dynamic accounting methods that adjust based on the algorithm's runtime convergence. Moreover, Lecuyer et al.~\cite{lecuyer2019privacy} ensured a global differential privacy guarantee in environments where datasets are continuously expanding.

\subsubsection{Local differential privacy by randomized response to a query}
% Apart from adding noise to individual data, random response is another popular approach to achieve local differential privacy. 
Warner et al.~\cite{warner1965randomized} introduced the randomized response mechanism in 1965, which is a technique specifically designed to preserve respondent privacy, commonly referred to as $w$-RR. The key idea is to introduce randomness into the response process, allowing respondents to answer sensitive questions without fear of exposure, while still enabling researchers to estimate the true distribution of responses.
The $w$-RR technique is specifically designed for discrete datasets with exactly two distinct values. 
To extend its applicability, two primary paths for improvement can be explored. The first involves encoding and transforming the variable's multiple values, as demonstrated in methods such as Randomized Aggregatable Privacy-Preserving Ordinal Response (RAPPOR)~\cite{rappor} and S-Hist~\cite{bassily2015local}, to ensure compatibility with the binary nature expected by $w$-RR.
The RAPPOR employs Bloom filters and hash functions to encode a single element into a vector, then perturbs the original value through both permanent and instantaneous randomized responses. It represents a method for collecting statistical data from end-user applications while preserving anonymity and ensuring robust privacy protections. 
However, RAPPOR has two main drawbacks: (1) the transmission overhead between the user and the data aggregator is relatively high, as each user is required to send a vector whose length is determined by the size of its bloom filter; (2) the data collector must pre-collect a list of candidate strings for frequency counting. 

To address the first issue, the S-Hist method~\cite{bassily2015local} offered a solution in which each user encodes the string, randomly selects a bit, applies random response techniques to perturb it, and then sends it to the data collector. This approach significantly reduces the transmission cost by minimizing the amount of data each user needs to send. 
For the second problem, building upon the RAPPOR-based encoding-decoding framework, Kairouz et al.~\cite{kairouz2016discrete} further proposed the $O$-RR method by introducing hash mapping and grouping operations. Hash mapping allows the method to focus on the encoded values rather than the original strings, eliminating the need to pre-collect a list of candidate strings. Additionally, the use of grouping operations reduces the probability of hash mapping value collisions. This combination of techniques enhances the robustness and flexibility of differential privacy in various data collection scenarios.
To overcome the second limitation of $w$-RR, one approach is to improve the distribution of the $w$-RR technique, making it applicable to variables with more than two values. This has been achieved through mechanisms such as $k$-RR~\cite{kairouz2014extremal}. In the $k$-RR method, the original data is preserved with probability $p$ and is flipped to any other value with a probability of $\frac{1-p}{n-1}$,  where $n$ is the number of possible values. This mechanism ensures that the perturbation is uniformly distributed across all possible values, thereby preserving the privacy of the data. 

% The techniques discussed above are designed for discrete values. For continuous values, however,  they should first be discretized to meet the fundamental requirement of the $w$-RR technique before privacy protection can be applied. This concept has been primarily applied in the context of local differential privacy for mean statistics methods like MeanEst~\cite{duchi2013local,duchi2014privacy} and Harmony-mean~\cite{nguyen2016collecting}.

\subsection{Differential Privacy for Non-scalar Data}
Recently, as the application of differential privacy has expanded to more scenarios, differential privacy methods designed for non-scalar datasets, i.e., vectors, matrices, and tensors, have been proposed.
For instance, Li et al. \cite{10726610} proposed a new privacy budget recovery mechanism, called ChainDP, which adds noise sequentially, so that the noise added by the previous user can be used by subsequent users, thereby reducing the overall noise level and estimation error. The MVG mechanism \cite{chanyaswad2018mvg} is tailored for matrix-valued queries and incorporates matrix-valued noise to ensure $(\epsilon, \delta)$-differential privacy. The $l_2$-sensitivity of the MVG mechanism is measured using the Frobenius norm, which quantifies the discrepancy between two neighboring matrices. 
The IDN mechanism \cite{direction} achieved differential privacy by appending a noise tensor of equal magnitude to the data tensor. Abadi et al.~\cite{abadi2016deep} presented an innovative accounting approach for the Gaussian mechanism that minimizes the total additive noise while preserving the same level of privacy protection.
Moreover, in order to obtain more utility under the same privacy budget, Zhang et al. \cite{LLAGP} proposed an adaptive gradient perturbation mechanism, injecting different amount of noise according to the order of hidden layers, and Tan et al. \cite{DPHSGD} used a diagonal Hessian Matrix to generate a more effective clipping threshold of gradients.

However, LDP demands a higher level of privacy than traditional differential privacy, necessitating more noise to achieve comparable privacy guarantees, which can significantly degrade data utility due to extensive perturbation; thus additional mechanisms need to be introduced to solve the issue. 
Qiu et al. \cite{qiu2023fast} introduced a sparse weight matrix that computes the non-zero elements of the matrix, thereby reducing the computational load of LDP when applied to large-scale datasets. But neither of them is specific to tensors.
Wang et al. \citep{individual} applied differential privacy to high-order, high-dimensional sparse tensors within the IoT transmission context through individual randomized responses. But when the central server is required to receive tensor information directly from users, this approach risks disclosing the user's original information with a certain probability, thereby undermining the purpose of privacy protection. Ren et al. \cite{LoPub} proposed a locally privacy-preserving scheme for crowdsensing systems to collect and build high-dimensional data from distributed users. It is improved based on the RAPPOR algorithm, which uses a hash function to convert raw data into fixed-length binary feature vectors. Due to the irreversibility of hash functions (i.e., the inability to reversely derive original attribute values from the bitstring of a Bloom filter), this method is not applicable to the scenarios in this paper where raw dataset elements are required and the query function is the identity function.

%Our mechanism is specifically tailored to address the complexities of tensor data. We demonstrate that our mechanism attains $\epsilon$-local differential privacy for tensors, as detailed in Section \ref{sec:Preliminaries}. This innovation bridges the gap between traditional local differential privacy mechanisms and the sophisticated requirements of high-dimensional data, preserving privacy while maintaining utility. 

\subsection{Learning With Differential Privacy}
A range of studies explored the integration of differential privacy techniques into machine learning models, addressing various privacy-preserving objectives. These efforts encompass differentially private individual data~\cite{chanyaswad2018mvg,wang2018not}, model outputs~\cite{papernot2018scalable}, model parameters~\cite{lee2018concentrated,lecuyer2019privacy,yu2019differentially,song2013stochastic,shokri2015privacy,bassily2014private,wang2017differentially}, and even objective functions~\cite{phan2016differential,zhang2017efficient}. Our focus is especially on machine learning applications in distributed systems, which often involve frequent transmission of high-dimensional or complex tensor-valued data like raw private data, extracted features, or model parameters. 

In summary, we consider our paper introduces a primitive local differential privacy scheme, which fits well in distributed computing systems but certainly can be applied in any other machine learning systems, where individual users' tensor-valued data is private and requires preservation.

\section{Preliminaries} \label{sec:Preliminaries}
In this section, we aim to provide our readers with foundational knowledge of differential privacy to facilitate the comprehension of our work.

\subsection{Local Differential Privacy}
To address the scenario where the central server is untrustworthy, local differential privacy (LDP) enables data collection and analysis while ensuring strong privacy protection by perturbing data before it is shared with a central aggregator. This ensures that even the data collector cannot determine an individual’s true value with certainty.
The privacy guarantee of LDP is expressed by bounding the logarithmic ratio of the output probabilities for any two possible inputs from a single user, ensuring that the private mechanism $M$ does not reveal too much about the user's true input. This local and individualized privacy guarantee makes LDP particularly suitable for scenarios where users do not trust the data collector and need to protect their data before sharing it. Formally, given any two tensor-valued inputs from the same user $\mathcal{X}_1$ and $\mathcal{X}_2$, $y$ be the output and $M$ be the private scheme, the following condition holds:
\begin{Def}[$\epsilon$-LDP]\label{def1}
    An randomized algorithm \textbf{M} satisfies $\epsilon$-LDP (where \(\epsilon \geq 0\)), if for any tensor-valued inputs $\mathcal{X}_1$ and $\mathcal{X}_2$ from the same user, we have $$\forall y\in Range(\textbf{M}), Pr[\textbf{M}(\mathcal{X}_1)=y] \leq e^{\epsilon}\cdot  Pr[\textbf{M}(\mathcal{X}_2)=y]$$ where $Range(\textbf{M})$ represents a set of all possible outputs of the algorithm \textbf{M}.
\end{Def}

The parameter $\epsilon$ is the privacy budget of the protection mechanism \textbf{M}, representing the protection level of data. Generally speaking, a lower budget $\epsilon$ implies a higher level of privacy protection but usually lower data utility. So the core of research on DP and LDP lies in how to enhance data utility on the premise of ensuring privacy protection, thereby striking a balance between privacy and utility.

\textbf{Problem Definition and Notations.} We suppose that in this article, each user has a number of tensor-valued data, and $\mathcal{X}\in\mathbf{R}^{I_1\times I_2\times\dots\times I_N}$ is any one of them. We utilize $\Delta$ to represent the range of the values of a tensor in a dataset and $[\Delta]$ to denote all possible values of the tensor. For example, $\Delta=256$ and $[\Delta]=\{0, 1, 2, \dots, 255\}$ for an image of the MNIST dataset, because that range of values of an image of the MNIST dataset is 0-255.Without compromising generality, we assume the input domain is $[\Delta]$, i.e., $\mathcal{X}_{i_1 i_2 \dots i_N}\in[\Delta]$. If the tensor takes continuous values, it is recommended to discretize it initially.
The notations used in this paper are summarized in Table~\ref{tab:table1} for easy reference.
\begin{table}[t]
\caption{Notations}\label{tab:table1}
\centering
\resizebox{\columnwidth}{!}{
\begin{tabular}{c c}
\hline
Symbol & Definition\\
\hline
$\mathcal{X}$ & N-order tensor \\
%\hline
$\mathcal{Z}$ & Noise tensor \\
%\hline
$I$ & Product of the tensor's dimensions, i.e., $I=I_1 I_2 \cdots I_N$ \\
%\hline
$\Omega$ & Index $(i_1,i_2,\dots,i_N)$ of tensor components\\
%\hline
$\Delta$ & Range of the values of a tensor in a dataset\\
%\hline
$\epsilon$ & Privacy budget\\
%\hline
$Lap(\mu,b)$ & Laplace distribution with mean $\mu$ and variance $2b^2$\\
%\hline
$\mathcal{N}(\mu,\sigma^2)$ & Normal distribution with mean $\mu$ and variance $\sigma^2$\\
%\hline
$p$ & Flipping probability\\
\hline
\end{tabular}
}
\end{table}

In the context of LDP application, privacy protection is moved downstream from central servers to data-generating endpoints, such as users or edge devices. This shift eliminates the need for a trusted third party to handle raw data, as sensitive information is processed locally before any transmission or sharing. Therefore, our goal is to perturb the original private tensor on the client-side, ensuring that users' sensitive information can be properly preserved without compromising data utility. To access the level of privacy protection, the parameter $\epsilon$, defined as the upper bound of the KL-divergence between the data distributions before and after perturbation, quantifies the effectiveness of privacy-preserving algorithms. As stated in Definition~\ref{def1}, a lower value of $\epsilon$ corresponds to stronger privacy guarantees. To access the utility, we evaluate the F1-score of the model parameters trained from the perturbed data as a measure of the data's effectiveness. Specifically, we train the model with the perturbed data and then validate it with the unperturbed data. The closer the validation results are to the original, the higher the utility. It is evident that as privacy protection increases (i.e., $\epsilon$ decreases), the accuracy of the trained model tends to decline. Our research seeks to strike a balance between data utility and privacy protection.

\subsection{Randomized Response}
Randomized response is a statistical technique introduced in the 1960s by \cite{warner1965randomized} for gathering data on sensitive topics while allowing respondents to maintain confidentiality. Surveys employing this method enable the accurate calculation of precise population statistics while ensuring individual privacy. 
Today, it is the dominant perturbation mechanism used in local differential privacy protection of scalars. 
For binary-valued problems, the randomized response method retains the original data with probability \( p \) and flips it with probability \( 1 - p \). In most cases, $p>1/2$, i.e., $p>1-p$, so for any $v_1 \neq v_2$, we have $$\frac{Pr[v_1=y]}{Pr[v_2=y]}\leq \frac{p}{1-p}.$$ Therefore, it satisfies \(\ln(\frac{p}{1-p})\)-LDP.

\section{Local Differential Privacy of Tensors} \label{sec:methodologies}
In this section, we will formally introduce the proposed local differential privacy method for tensors. First, we present the unified definition of LDP for tensor-valued data, then the motivation, theorem, and proof of TLDP and weighted TLDP are presented.\\

\begin{Def}[\textit{LDP for Tensor-valued Data}]\label{def2}
    A randomized mechanism \textbf{M} gives $\epsilon$-LDP for tensors, where $\epsilon \geq 0$, if and only if for any data records $\mathcal{X}$ and $\mathcal{X}'$, we have $$\forall \mathcal{X}^\ast \in Range(\textbf{M}), Pr[\textbf{M}(\mathcal{X})=\mathcal{X}^\ast] \leq e^{\epsilon}\cdot Pr[\textbf{M}(\mathcal{X}')=\mathcal{X}^\ast],$$ where $Range(\textbf{M})$ represents the set of all possible outputs of the algorithm \textbf{M}.
\end{Def}

When DP is applied for privacy protection, the indicator global sensitivity, defined as $\Delta f=\underset{D\sim D'}{max}\,||f(D)-f(D')||$, which quantifies the maximum change in query results between two adjacent datasets, is indispensable to determine the noise level of privacy protection. However, in Local Differential Privacy (LDP), the concept of global sensitivity does not apply since users are unaware of each other's records. In this paper, we use $\Delta$ to denote the range of variation of the raw data. Although $\Delta$ and the global sensitivity are numerically equal as the query function is specified to be identity in this paper, it is worth noting that $\Delta$ differs from global sensitivity: the former is only related to the users' initial data, while global sensitivity pertains to query functions and is affected by the size of tensors.

Comparing Definition~\ref{def1} with Definition~\ref{def2}, it is evident that Definition~\ref{def1} is a special case of Definition~\ref{def2}. Specifically, when the tensor order is reduced to one, the multi-dimensional tensor structure in Definition~\ref{def2} degenerates into a one-dimensional vector form, which aligns exactly with the data representation specified in Definition~\ref{def1}. In contrast, Definition~\ref{def2} relaxes the constraint on tensor order, allowing $N$ to take any positive integer value and supporting multi-indexed tensor inputs (e.g., $N = 2$ for matrix-valued data and $N = 3$ for 3D tensor data) by introducing the index set $\{I_1, I_2, ..., I_N\}$. This hierarchical relationship between the two definitions is not only theoretically consistent but also practically valuable: it means that any conclusion derived under Definition~\ref{def1} can be naturally extended to the broader framework of Definition~\ref{def2}, while Definition~\ref{def2} retains the ability to handle simpler one-dimensional scenarios through its special case.

\subsection{TLDP}
We introduce TLDP to protect the privacy of a user's tensor data in distributed systems before transmitting it to untrusted parties. This approach enables the aggregation of collective data while preserving individual privacy, eliminating the need for a trusted central server.

Originally designed for scalar data with binary outcomes, the randomized response technique has been extended to multi-valued scenarios by encoding scalars into higher-dimensional vectors or adjusting flipping probabilities.
However, applying these methods to tensor data presents unique challenges. Encoding each tensor element individually would require substantial storage resources, exponentially increasing algorithmic complexity. Conversely, given our research's need to retain information from each tensor rather than relying solely on aggregation, methods that randomly perturb selected tensor elements could significantly reduce the utility of the original data. Both refinement strategies are unable to meet our requirements of perturbing the raw tensors as little as possible while preserving differential privacy and avoiding other limiting factors such as excessive storage space.

In light of these challenges, there is a clear need for more effective methods that balance privacy protection and the utility of tensor data. Building upon the aforementioned concepts, we propose a novel approach named TLDP for achieving tensor differential privacy by injecting noise into specific positions of tensors, which are determined via randomized response. In this method, each tensor element retains its original value with a probability of $p$ and introduces calibrated noise (either Laplace or Gaussian noise) with a probability of $1-p$. The randomized response mechanism still determines whether a piece of data will be altered, except that the way of alteration is no longer flipping but injecting noise. The overall procedure is shown in Algorithm \ref{alg:alg1}. Our method ensures privacy protection for each user and demonstrates that, compared to directly adding a noise tensor, the inherent randomness of randomized response allows for meeting privacy budgets without requiring excessive noise, thereby maintaining a balance between privacy preservation and data utility. In the following, we will employ rigorous mathematical proofs to verify the privacy preservation property of this algorithm.

\begin{algorithm}[t]
    \caption{TLDP}\label{alg:alg1}
    $\textbf{Input:}$ (a) privacy parameter $\epsilon$,\quad(b) range of the initial data $\Delta$,\quad (c) raw tensor $\mathcal{X}\in\mathbb{R}^{I_1\times I_2\times\dots\times I_N}$\\
    $\textbf{Output:}$ perturbed tensor $\widetilde{\mathcal{X}}$
    \begin{algorithmic}[1]
        \STATE Initialize noise tensor $\mathcal{Z}\in\mathbb{R}^{I_1\times I_2\times\dots\times I_N}$
        \STATE Compute Laplace noise parameter $b=\frac{\Delta}{\epsilon}$, retaining possibility $p=\frac{e^{\epsilon-\frac{I\Delta}{b}}}{2b+e^{\epsilon-\frac{I\Delta}{b}}}$; or Gaussian noise parameter $\sigma^2=\frac{\Delta^2}{2\epsilon}$, retaining possibility $p=\frac{e^{\epsilon-\frac{I\Delta^2}{2\sigma^2}}}{\sigma\sqrt{2\pi}+e^{\epsilon-\frac{I\Delta^2}{2\sigma^2}}}$
        \FOR{$(i_1,i_2,\dots,i_N)\in\Omega$}
            \STATE Generate a random number $r\in (0,1)$
            \IF{$r> p$}
                \STATE Generate noise that correspond to Laplace distribution $\mathcal{Z}_{i_1,i_2,\dots,i_N}\sim Lap(0,b)$ or Gaussian distribution $\mathcal{Z}_{i_1,i_2,\dots,i_N}\sim \mathcal{N}(0,\sigma^2)$
            \ELSIF{}
                \STATE $\mathcal{Z}_{i_1,i_2,\dots,i_N}=0$
            \ENDIF
        \ENDFOR
        \RETURN $\widetilde{\mathcal{X}}=\mathcal{X}+\mathcal{Z}$
    \end{algorithmic}
\end{algorithm}

\begin{Thm}[Privacy of TLDP]
    TLDP ensures the LDP of tensor information.
\end{Thm}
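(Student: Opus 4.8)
The plan is to verify Definition~\ref{def2} directly: fix arbitrary tensors $\mathcal{X},\mathcal{X}'$ and an arbitrary admissible output $\mathcal{X}^\ast$, and bound the likelihood ratio $Pr[\textbf{M}(\mathcal{X})=\mathcal{X}^\ast]\,/\,Pr[\textbf{M}(\mathcal{X}')=\mathcal{X}^\ast]$ by $e^\epsilon$. Since Algorithm~\ref{alg:alg1} draws the random number $r$ and the noise entry independently at each index $\omega=(i_1,\dots,i_N)\in\Omega$, the output law factorizes over coordinates, so it suffices to control the per-coordinate ratio $\rho_\omega:=\frac{g(x^\ast_\omega\mid x_\omega)}{g(x^\ast_\omega\mid x'_\omega)}$ and then take the product over the $I=I_1\cdots I_N$ coordinates. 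Here $g(\cdot\mid v)$ is the law of one coordinate of the output: with probability $p$ it is an atom at $v$, and with probability $1-p$ it is $v$ plus a $Lap(0,b)$ (resp.\ $\mathcal{N}(0,\sigma^2)$) draw; against a common dominating measure this reads $g(t\mid v)=p\,\mathbf{1}[t=v]+(1-p)\,\frac{1}{2b}e^{-|t-v|/b}$ in the Laplace case, and similarly with the Gaussian density otherwise.

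Next I would bound $\rho_\omega$ by case analysis on whether $x^\ast_\omega$ coincides with $x_\omega$, with $x'_\omega$, with both, or with neither. If $x^\ast_\omega=x_\omega\neq x'_\omega$, the numerator contains the atom $p$ and the denominator is at least $(1-p)\frac{1}{2b}e^{-|x_\omega-x'_\omega|/b}$, so $\rho_\omega\le\frac{2bp}{1-p}\,e^{|x_\omega-x'_\omega|/b}\le\frac{2bp}{1-p}\,e^{\Delta/b}$ using $|x_\omega-x'_\omega|\le\Delta$; the symmetric case $x^\ast_\omega=x'_\omega\neq x_\omega$ gives $\rho_\omega\le\frac{1-p}{2bp}$; the "neither" case gives the pure Laplace bound $\rho_\omega\le e^{|x_\omega-x'_\omega|/b}\le e^{\Delta/b}$ by the reverse triangle inequality; and the "both" case is trivial. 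Substituting $b=\Delta/\epsilon$ (so $\Delta/b=\epsilon$) together with $\frac{p}{1-p}=\frac{e^{\epsilon-I\Delta/b}}{2b}$, which is exactly what the choice of $p$ in Algorithm~\ref{alg:alg1} encodes, turns each per-coordinate bound into an explicit power of $e^\epsilon$ whose exponent carries the compensating $-I\Delta/b$ term.

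Finally I would multiply the per-coordinate bounds over all $I$ coordinates and check that the product is still $\le e^\epsilon$: the role of embedding $I\Delta/b$ in the exponent of $p$ is precisely to make the worst-case coordinatewise product collapse back to $e^\epsilon$, the $\frac{2bp}{1-p}=e^{\epsilon-I\Delta/b}$ factor contributing per affected coordinate an amount that, together with the $e^{\Delta/b}$ slack, sums to $\epsilon$. The Gaussian branch is handled the same way, with $\sigma^2=\Delta^2/(2\epsilon)$, $\frac{p}{1-p}=\frac{e^{\epsilon-I\Delta^2/(2\sigma^2)}}{\sigma\sqrt{2\pi}}$, and the Gaussian tail bound in place of the Laplace one. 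I expect the main obstacle to be twofold: making the mixed atom-plus-density argument rigorous — choosing the dominating measure so that the "keep" event and the event "the injected noise lands exactly on the original value" are accounted for consistently — and, relatedly, verifying that the accumulation of the per-coordinate slacks over all $I$ coordinates is cancelled \emph{exactly} by the $-I\Delta/b$ correction rather than merely bounded loosely, since a careless treatment of either point inflates the final ratio by a spurious multiplicative factor.
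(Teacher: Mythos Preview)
Your coordinate-wise factorization and case analysis track the paper's argument closely; the paper writes the numerator as $p^t[(1-p)\tfrac{1}{2b}]^{I-t}$ (with $t$ the number of coordinates where $\mathcal{X}^\ast=\mathcal{X}$) and the denominator as $[(1-p)\tfrac{1}{2b}e^{-\Delta/b}]^I$, landing on the same expression $(\tfrac{2bp}{1-p})^t e^{I\Delta/b}$ that your case analysis produces. You are in fact more careful than the paper on the mixed atom-plus-density issue, which the paper ignores entirely.

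The obstacle you flag at the end is real and your plan will not close it. In the ``neither'' case the per-coordinate bound $e^{\Delta/b}=e^\epsilon$ does not involve $p$, so the compensating $-I\Delta/b$ hidden in $p$ is never invoked on such coordinates. When $\mathcal{X}^\ast$ disagrees with $\mathcal{X}$ in every coordinate (i.e.\ $t=0$, which is generic once continuous noise is added) all $I$ factors are of this type and the product is $e^{I\Delta/b}=e^{I\epsilon}$, not $e^\epsilon$; taking $\mathcal{X}=0$, $\mathcal{X}'=\Delta\cdot\mathbf{1}$, $\mathcal{X}^\ast=-K\cdot\mathbf{1}$ for large $K$ gives an exact likelihood ratio of $e^{I\epsilon}$, so no sharper bookkeeping can rescue the bound. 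The paper's proof has the identical gap: from $(\tfrac{2bp}{1-p})^t e^{I\Delta/b}$ it passes to $\tfrac{2bp}{1-p}\, e^{I\Delta/b}=e^\epsilon$ by taking $t=1$, but since $\tfrac{2bp}{1-p}=e^{\epsilon-I\Delta/b}=e^{(1-I)\epsilon}<1$ whenever $I>1$, the quantity $(\tfrac{2bp}{1-p})^t$ is \emph{decreasing} in $t$ and is maximized at $t=0$, where the bound becomes $e^{I\epsilon}$. Your plan therefore hits the wall precisely where the paper's proof simply steps over it without justification.
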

    
\begin{proof}
    Let $\mathcal{X}$ and $\mathcal{X}'$ be any two tensors of the same size from the same user, and $\mathcal{X}^\ast$ be any output. It can be observed that the probability of the tensor remaining unperturbed is the highest. 
    We assume that in the unperturbed tensor $\mathcal{X}$, there are a total of t components that preserve the original data, because the corresponding random numbers are smaller than p.\\
    %\begin{enumerate}
        (1)\; Laplace noise\\%\item{Laplace noise}\\
        Since a Laplace distribution with mean 0 and scale parameter $b$,
        \begin{equation}
            Lap(x|b)=\frac{1}{2b}e^{-\frac{|x|}{b}},
        \end{equation}
        so 
    \begin{equation}
        \begin{aligned}
       &\frac{Pr[\mathcal{M}_{Lap}(\mathcal{X})=\mathcal{X}^\ast]}{Pr[\mathcal{M}_{Lap}(\mathcal{X}')=\mathcal{X}^\ast]}\\
       & \leq \frac{p^t[(1-p)\cdot\frac{1}{2b}]^{I-t}}{[(1-p)\cdot \frac{1}{2b}\cdot e^{-\frac{\Delta}{b}}]^I}
       =(\frac{p}{1-p}\cdot 2b)^t\cdot e^{\frac{I\Delta}{b}}\\
       & \leq\frac{p}{1-p}\cdot 2b\cdot e^{\frac{I\Delta}{b}} =\frac{e^{\epsilon-\frac{I\Delta}{b}}}{2b}\cdot 2b\cdot e^{\frac{I\Delta}{b}}\\
       &=e^{\epsilon}.  
    \end{aligned}
    \end{equation}
        (2)\;Gaussian noise\\%\item{Gaussian noise}\\
        Since a Gaussian distribution with mean 0 and scale parameter $\sigma^2$,
        \begin{equation}
            N(x|\sigma^2)=\frac{1}{\sigma\sqrt{2\pi}}e^{-\frac{x^2}{2\sigma^2}},
        \end{equation}
        so 
    \begin{equation} \label{eq:TLDP-G}
        \begin{aligned}
       &\frac{Pr[\mathcal{M}_{Gau}(\mathcal{X})=\mathcal{X}^\ast]}{Pr[\mathcal{M}_{Gau}(\mathcal{X}')=\mathcal{X}^\ast]}\\
       &  \leq \frac{p^t[(1-p)\cdot\frac{1}{\sigma\sqrt{2\pi}}]^{I-t}}{[(1-p)\cdot \frac{1}{\sigma\sqrt{2\pi}}\cdot e^{-\frac{\Delta^2}{2\sigma^2}]^I}}
       =(\frac{p}{1-p}\cdot \sigma\sqrt{2\pi})^t\cdot e^{\frac{I\Delta^2}{2\sigma^2}}\\
       & \leq\frac{p}{1-p}\cdot \sigma\sqrt{2\pi} \cdot e^{\frac{I\Delta^2}{2\sigma^2}}
       =\frac{e^{\epsilon-\frac{I\Delta^2}{2\sigma^2}}}{\sigma\sqrt{2\pi}}\cdot \sigma\sqrt{2\pi}\cdot e^{\frac{I\Delta^2}{2\sigma^2}}=e^{\epsilon}.  
    \end{aligned}
    \end{equation}
    %\end{enumerate}
\end{proof}

The penultimate equations in the above proofs are derived by substituting the formula for $p$ from Algorithm \ref{alg:alg1}. The same logic applies to subsequent proofs, and no further explanation will be provided hereafter.

Although DP and LDP differ in several respects, when both perturbation methods involve injecting Laplace noise or Gaussian noise, the amount of noise under the same $\epsilon$ can reflect the impact of each on the accuracy of the results. The following Theorem \ref{thm:important} demonstrates that, under the same privacy budget, the TLDP method can significantly reduce the amount of noise injected compared to adding noise to all components.

\begin{Thm}\label{thm:important}
    Compared to methods that inject noise into each component, LDP for tensors reduces the noise while guaranteeing the same privacy budget.
\end{Thm}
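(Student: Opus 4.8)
The plan is to compare, under a fixed privacy budget $\epsilon$, the total expected magnitude of noise added by TLDP against that of a baseline mechanism that injects i.i.d. noise into every one of the $I$ tensor components. First I would fix the baseline: to achieve $\epsilon$-LDP by adding Laplace noise $Lap(0,b')$ to all $I$ entries, the standard calibration (treating the identity query with range $\Delta$ per coordinate, so $\ell_1$-sensitivity $I\Delta$) forces $b' = \frac{I\Delta}{\epsilon}$, giving total expected absolute noise $\mathbb{E}\big[\sum_{\Omega} |\mathcal{Z}_{i_1\dots i_N}|\big] = I b' = \frac{I^2\Delta}{\epsilon}$. For the Gaussian case the analogous baseline variance scales like $I^2$ as well. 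In contrast, TLDP uses $b = \frac{\Delta}{\epsilon}$ and injects noise into a component only with probability $1-p$, so its total expected absolute noise is $I(1-p)b = I(1-p)\frac{\Delta}{\epsilon}$. Since $0<1-p<1$, this is strictly smaller than $\frac{I\Delta}{\epsilon}$, which is already a factor $I$ below the baseline; the improvement factor is therefore roughly $I/(1-p)$.

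The key steps, in order, are: (i) state precisely the baseline mechanism and derive its noise parameter from the per-coordinate sensitivity $\Delta$ and the $\ell_1$ (resp. $\ell_2$) composition over $I$ coordinates, obtaining $b'=I\Delta/\epsilon$ (resp. $\sigma'^2 = I^2\Delta^2/(2\epsilon)$ up to constants); (ii) compute the expected total noise for the baseline, $\mathbb{E}[\|\mathcal{Z}_{\text{base}}\|_1] = I b'$ and $\mathbb{E}[\|\mathcal{Z}_{\text{base}}\|_F^2] = I\sigma'^2$; (iii) recall from Algorithm~\ref{alg:alg1} that TLDP uses $b = \Delta/\epsilon$, $\sigma^2 = \Delta^2/(2\epsilon)$ and flips each coordinate with probability $1-p$, so by linearity of expectation $\mathbb{E}[\|\mathcal{Z}_{\text{TLDP}}\|_1] = I(1-p)b$ and $\mathbb{E}[\|\mathcal{Z}_{\text{TLDP}}\|_F^2] = I(1-p)\sigma^2$; (iv) form the ratio and conclude it equals $(1-p)/I < 1$ in the Laplace case (and $(1-p)/I$ for the Gaussian variance), both strictly less than one, establishing the claimed reduction; (v) note this holds for every tensor shape since the argument only uses $I = I_1\cdots I_N$ and $\Delta$, matching the ``independent of the tensor's shape'' claim from the introduction.

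The main obstacle I anticipate is pinning down the right baseline so the comparison is fair rather than a strawman. One must argue that any mechanism adding independent noise to all $I$ coordinates and satisfying $\epsilon$-LDP (in the sense of Definition~\ref{def2}, where $\mathcal{X},\mathcal{X}'$ may differ in \emph{all} coordinates, each by up to $\Delta$) genuinely needs scale growing with $I$ — this is where the factor $I$ (as opposed to $\sqrt{I}$ or a constant) enters, and it should be derived from the worst-case $\ell_1$ displacement $I\Delta$ between neighbouring tensors, exactly as in the denominator exponent $e^{-I\Delta/b}$ appearing in the proof of Theorem~1. A secondary subtlety is the Gaussian case: since pure $\epsilon$-LDP with Gaussian noise is handled here via the bounded-support-style bound used in Eq.~\eqref{eq:TLDP-G}, I would mirror that same accounting for the baseline rather than invoking $(\epsilon,\delta)$ guarantees, keeping the two mechanisms on equal footing. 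Once the baseline calibration is fixed, steps (ii)--(iv) are routine expectation computations and the inequality is immediate.
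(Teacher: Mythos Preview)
Your approach is essentially the paper's: derive the noise scale the all-components baseline needs to meet the same $\epsilon$ (the paper gets $b_1=I\Delta/\epsilon$ and $\sigma_1^2=I\Delta^2/(2\epsilon)$ via exactly the ratio bounds you point to from Theorem~1), then compare with TLDP's $b=\Delta/\epsilon$, $\sigma^2=\Delta^2/(2\epsilon)$ to obtain the factor $1/I$. The paper stops at the scale-parameter ratio $b/b_1=\sigma^2/\sigma_1^2=1/I$, whereas you additionally fold in the $(1-p)$ retention probability via expected total noise---a harmless refinement that only strengthens the inequality. One slip: your tentative Gaussian baseline $\sigma'^2=I^2\Delta^2/(2\epsilon)$ is off by a factor of $I$; mirroring the accounting in Eq.~\eqref{eq:TLDP-G} as you propose yields $\sigma_1^2=I\Delta^2/(2\epsilon)$, so the variance ratio is $1/I$, not $1/I^2$.
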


\begin{proof}
    Let $\mathcal{X}$ and $\mathcal{X}'$ be any two tensors of the same size, and $\mathcal{X}^\ast$ be any output. 
    \begin{enumerate}
    \item{Laplace noise}\\
    If the noise with mean 0 and scale parameter $b_1$ is injected into each component:
    \begin{equation}
    \begin{aligned}
        \frac{Pr[\mathcal{M}_{Lap}(\mathcal{X})=\mathcal{X}^\ast]}{Pr[\mathcal{M}_{Lap}(\mathcal{X}')=\mathcal{X}^\ast]}&\leq\frac{(\frac{1}{2b_1})^I}{(\frac{1}{2b_1}\cdot e^{-\frac{\Delta}{b_1}})^I}=e^{\frac{I\Delta}{b_1}}.
    \end{aligned}
    \end{equation}
    Let $e^{\frac{I\Delta}{b_1}}=e^{\epsilon}$, then $b_1=\frac{I\Delta}{\epsilon}$. Given the same privacy budget and sensitivity, we have 
    \begin{equation}
        \frac{b}{b_1}=\frac{1}{I}.  
    \end{equation}
    Combining with the feature of Laplace distribution that the larger the scale parameter \( b \), the greater the noise, we conclude that TLDP injects less Laplace noise while guaranteeing the same privacy budget as adding noise tensor directly. In particular, the larger the size of the data, the more significant the noise reduction effect will be.

    \item{Gaussian noise}\\
    If the noise with mean 0 and scale parameter $\sigma_1^2$ is injected into each component:
    
    \begin{equation}
    \begin{aligned}
        \frac{Pr[\mathcal{M}_{Gau}(\mathcal{X})=\mathcal{X}^\ast]}{Pr[\mathcal{M}_{Gau}(\mathcal{X}')=\mathcal{X}^\ast]}&\leq\frac{(\frac{1}{\sigma_1\sqrt{2\pi}})^{I}}{(\frac{1}{\sigma_1\sqrt{2\pi}}\cdot e^{-\frac{\Delta^2}{2\sigma_1^2}})^{I}}=e^{\frac{I\Delta^2}{2\sigma_1^2}}.
    \end{aligned}
    \end{equation}
    Let $e^{\frac{I\Delta^2}{2\sigma_1^2}}=e^{\epsilon}$. Given the same privacy budget and sensitivity, we have 
    \begin{equation}
        \frac{\sigma^2}{\sigma^2_1}=\frac{1}{I}.
    \end{equation}
    Considering the property of the Gaussian distribution that a larger scale parameter $\sigma^2$ results in greater noise, we conclude that TLDP injects less Gaussian noise while guaranteeing the same privacy budget compared to directly adding a noise tensor. Specifically, as the size of the data increases, the noise reduction effect becomes more pronounced.
    \end{enumerate}
\end{proof}

\subsection{Weighted TLDP}
The overzealous pursuit of privacy protection often results in significant compromises to data usability. Striking a balance between these two facets is crucial, especially considering that data usability is not a variable we can easily manipulate. Therefore, attention must be focused on fine-tuning the privacy protection algorithms. Our innovative proposal involves assigning specific weights to the tensor components, where a higher weight indicates greater sensitivity at that particular position, consequently reducing the likelihood of its preservation. To simplify the algorithm, we introduce a weight matrix into the tensor data to be processed rather than using a weight tensor. For example, a color image is a three-order tensor of size $A\times B\times3$, and a weight matrix of size $A\times B$ is assigned based on the image content. Additional details are presented in Algorithm \ref{alg:alg2}. The weight matrix is given prior to the implementation of the algorithm, and there are different generation methods for it across various practical application scenarios. To illustrate, face recognition detection algorithms are used to identify facial regions in datasets such as LFW \cite{LFW}
, and these regions are assigned higher weights. Upon analyzing the following proof, it is evident that, given the same $\epsilon$, the presence of the weight matrix reduces the probability of retaining the original data compared to Algorithm~\ref{alg:alg1}, thereby providing stronger privacy protection.

\begin{algorithm}
    \caption{Weighted TLDP}\label{alg:alg2}
    \textbf{Input:} (a) privacy parameter $\epsilon$,\quad(b) range of the initial data $\Delta$,\quad (c) raw tensor $\mathcal{X}\in\mathbb{R}^{I_1\times I_2\times\dots\times I_N}$, (d)weight matrix $W=(w_{ij})_{I_M\times I_N}$\\
    \textbf{Output:} perturbed tensor $\widetilde{\mathcal{X}}$
    \begin{algorithmic}[1]
        \STATE Initialize noise tensor $\mathcal{Z}\in\mathbb{R}^{I_1\times I_2\times\dots\times I_N}$
        \STATE Compute Laplace noise parameter $b=\frac{\Delta}{\epsilon}$ or Gaussian noise parameter $\sigma^2=\frac{\Delta^2}{2\epsilon}$
        \FOR{$(i_1,i_2,\dots,i_N)\in\Omega$} 
            \STATE Compute the retaining possibility $p=\frac{(1-w_{mn})e^{\epsilon-\frac{I\Delta}{b}}}{2b+e^{\epsilon-\frac{I\Delta}{b}}}$ (Laplace) or $p=\frac{(1-w_{mn})e^{\epsilon-\frac{I\Delta^2}{2\sigma^2}}}{\sigma\sqrt{2\pi}+e^{\epsilon-\frac{I\Delta^2}{2\sigma^2}}}$ (Gaussian), where $w_{mn}$ is the weight corresponding to $\mathcal{X}_{i_1i_2\dots i_N}$
            \STATE Generate a random number $r\in (0,1)$
            \IF{$r> p$}
                \STATE Generate noise that correspond to Laplace distribution $\mathcal{Z}_{i_1,i_2,\dots,i_N}\sim Lap(0,b)$ or Gaussian distribution $\mathcal{Z}_{i_1,i_2,\dots,i_N}\sim N(0,\sigma^2)$
            \ELSIF{}
                \STATE $\mathcal{Z}_{i_1,i_2,\dots,i_N}=0$
            \ENDIF
        \ENDFOR
        \RETURN $\widetilde{\mathcal{X}}=\mathcal{X}+\mathcal{Z}$
    \end{algorithmic}
\end{algorithm}

\begin{Thm}[Privacy of Weighted TLDP]
    Weighted TLDP ensures the LDP of tensor information.
\end{Thm}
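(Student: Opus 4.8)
The plan is to replay the proof of the Privacy of TLDP theorem, changing only the single place where the weight matrix enters. Fix two tensors $\mathcal{X}$ and $\mathcal{X}'$ of the same size from the same user and an arbitrary output $\mathcal{X}^\ast \in Range(\textbf{M})$. As in the unweighted case, the dominating event is the one in which a set $S$ of components keeps its original values (their random draws fell below the local threshold) while the remaining components receive fresh Laplace (resp.\ Gaussian) noise; I would condition on such a realization with $|S| = t$, taking, as the unweighted proof implicitly does, the dominating configuration to retain at least one component. The only structural difference from Algorithm~\ref{alg:alg1} is that the retaining probability is now position-dependent, so I would write $p_j$ for the value used at component $j$, i.e.\ $p_j = \frac{(1-w_{m_j n_j})\,e^{\epsilon - I\Delta/b}}{2b + e^{\epsilon - I\Delta/b}}$ for the Laplace mechanism, with the obvious analogue (denominator $\sigma\sqrt{2\pi}$, exponent $I\Delta^2/(2\sigma^2)$) for the Gaussian mechanism.

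The first and essentially only new step is a per-component estimate: since $1 - p_j = \frac{2b + w_{m_j n_j} e^{\epsilon - I\Delta/b}}{2b + e^{\epsilon - I\Delta/b}}$, one gets
\[
\frac{p_j}{1-p_j}\cdot 2b \;=\; \frac{(1-w_{m_j n_j})\,e^{\epsilon - I\Delta/b}}{2b + w_{m_j n_j}\,e^{\epsilon - I\Delta/b}}\cdot 2b \;\le\; (1-w_{m_j n_j})\,e^{\epsilon - I\Delta/b} \;\le\; e^{\epsilon - I\Delta/b} \;\le\; 1,
\]
valid for every $j$ because $w_{m_j n_j}\in[0,1]$ (the added denominator term is non-negative, the numerator factor at most one) and because $b=\Delta/\epsilon$ with $I\ge 1$ forces the exponent to be non-positive. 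Thus every weighted factor is dominated by the weight-free quantity of the TLDP proof; the same estimate also yields $p_j \le p^{\mathrm{TLDP}}$, which records the intended side effect that weights only lower the retention probability and hence strengthen privacy. Substituting into the identical density-ratio computation used for TLDP gives
\[
\frac{Pr[\mathcal{M}_{Lap}(\mathcal{X})=\mathcal{X}^\ast]}{Pr[\mathcal{M}_{Lap}(\mathcal{X}')=\mathcal{X}^\ast]} \;\le\; \frac{\prod_{j\in S} p_j \cdot \prod_{j\notin S}\big[(1-p_j)\tfrac{1}{2b}\big]}{\prod_{j=1}^{I}\big[(1-p_j)\tfrac{1}{2b}\,e^{-\Delta/b}\big]} \;=\; \Big(\prod_{j\in S}\tfrac{p_j}{1-p_j}\,2b\Big)\,e^{I\Delta/b} \;\le\; e^{\epsilon - I\Delta/b}\,e^{I\Delta/b} \;=\; e^{\epsilon},
\]
where the denominator uses $|\mathcal{X}_j - \mathcal{X}'_j|\le\Delta$ so that the noise branch of $\mathcal{X}'$ reproduces each entry of $\mathcal{X}^\ast$ with density at least $(1-p_j)\tfrac{1}{2b}e^{-\Delta/b}$, and the last inequality collapses the product over $S$ via the uniform bound above. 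The Gaussian case is word-for-word the same after the substitutions $2b\mapsto\sigma\sqrt{2\pi}$, $\Delta/b\mapsto\Delta^2/(2\sigma^2)$, and finally $\sigma^2=\Delta^2/(2\epsilon)$, mirroring Eq.~\eqref{eq:TLDP-G}. Hence Weighted TLDP satisfies $\epsilon$-LDP for tensor-valued data.

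The main obstacle is not any one calculation but the worst-case bookkeeping over configurations once the weights are heterogeneous: one must argue that, whatever $\mathcal{X}^\ast$ is and whichever subset $S$ is retained, the product $\prod_{j\in S}\tfrac{p_j}{1-p_j}2b$ stays controlled even though its factors now differ across $S$. The uniform per-component bound is exactly what resolves this — each factor is at most $e^{\epsilon - I\Delta/b}\le 1$, so the product over the dominating (high-retention) configuration never exceeds $e^{\epsilon - I\Delta/b}$ and the leftover $e^{I\Delta/b}$ cancels it. Care is needed only in stating this domination cleanly (and in flagging that, as in the TLDP proof, the retention-heavy regime is the one that governs the maximum); everything else is a transcription of the two TLDP proofs.
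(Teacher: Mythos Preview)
Your proposal is correct and follows essentially the same approach as the paper's own proof: bound the likelihood ratio by the retained-versus-noised configuration, reduce to $\big(\prod \tfrac{p}{1-p}\cdot 2b\big)\,e^{I\Delta/b}$, and then substitute the formula for $p$ to collapse the bound to $e^{\epsilon}$. The one notable difference is that you track position-dependent $p_j$ and establish a uniform per-component bound $\tfrac{p_j}{1-p_j}\cdot 2b \le e^{\epsilon - I\Delta/b}\le 1$, whereas the paper writes a single $p$ and a single $w_{mn}$ throughout as if the weight were constant; your treatment is the more careful one, but the underlying argument is identical.
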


\begin{proof}
    Let $\mathcal{X}$ and $\mathcal{X}'$ be any two tensors of the same size, and $\mathcal{X}^\ast$ be any output. It can be observed that the probability of the tensor remaining unperturbed is the highest. 
    We assume that in the unperturbed tensor $\mathcal{X}$, there are a total of t components that preserve the original data, because the corresponding random numbers are smaller than p.
    % We assume that in the unperturbed tensor $\mathcal{X}$, there are a total of $t$ components that preserve the original data, as in step 9 in Algorithm~\ref{alg:alg2}. 
    \begin{enumerate}
        \item {Laplace noise}\\
        Since a Laplace distribution with mean 0 and scale parameter $b$,
        \begin{equation}
            Lap(x|b)=\frac{1}{2b}\cdot e^{-\frac{|x|}{b}},
        \end{equation}
        so
        \begin{equation}\label{10}
        \begin{aligned}
        &\frac{Pr[\mathcal{M}_{Lap}(\mathcal{X})=\mathcal{X}^\ast]}{Pr[\mathcal{M}_{Lap}(\mathcal{X}')=\mathcal{X}^\ast]}\\
       & \leq \frac{p^t[(1-p)\cdot\frac{1}{2b}]^{I-t}}{[(1-p)\cdot \frac{1}{2b}\cdot e^{-\frac{\Delta}{b}}]^I}
       =(\frac{p}{1-p}\cdot 2b)^t\cdot e^{\frac{I\Delta}{b}}\\
       & \leq\frac{p}{1-p}\cdot 2b\cdot e^{\frac{I\Delta}{b}}\\ 
       &= \frac{(1-w_{mn})\cdot e^{\epsilon-\frac{I\Delta}{b}}}{2b+w_{mn}\cdot e^{\epsilon-\frac{I\Delta}{b}}}\cdot 2b\cdot  e^{\frac{I\Delta}{b}}\\
       &\leq \frac{(1-w_{mn})\cdot e^{\epsilon-\frac{I\Delta}{b}}}{2b}\cdot 2b\cdot  e^{\frac{I\Delta}{b}}\\
       &=(1-w_{mn})\cdot e^{\epsilon}<e^{\epsilon}.
    \end{aligned}
    \end{equation}
    \item{Gaussian noise}\\
    Since a Gaussian distribution with mean 0 and scale parameter $\sigma^2$,
    \begin{equation}
        N(x|\sigma^2)=\frac{1}{\sigma\sqrt{2\pi}}\cdot e^{-\frac{x^2}{2\sigma^2}},
    \end{equation}
    so 
    \begin{equation}\label{12}
        \begin{aligned}
       &\frac{Pr[\mathcal{M}_{Gau}(\mathcal{X})=\mathcal{X}^\ast]}{Pr[\mathcal{M}_{Gau}(\mathcal{X}')=\mathcal{X}^\ast]}\\
       & \leq \frac{p^t[(1-p)\cdot\frac{1}{\sigma\sqrt{2\pi}}]^{I-t}}{[(1-p)\cdot \frac{1}{\sigma\sqrt{2\pi}}\cdot e^{-\frac{\Delta^2}{2\sigma^2}}]^I}\\
       &=(\frac{p}{1-p}\cdot \sigma\sqrt{2\pi})^t\cdot e^{\frac{I\Delta^2}{2\sigma^2}}\\
       & \leq\frac{p}{1-p}\cdot \sigma\sqrt{2\pi}\cdot e^{\frac{I\Delta^2}{2\sigma^2}}\\ 
       &= \frac{(1-w_{mn})\cdot e^{\epsilon-\frac{I\Delta^2}{2\sigma^2}}}{\sigma\sqrt{2\pi}+w_{mn}\cdot e^{\epsilon-\frac{I\Delta^2}{2\sigma^2}}}\cdot \sigma\sqrt{2\pi}\cdot  e^{\frac{I\Delta^2}{2\sigma^2}}\\
       &\leq \frac{(1-w_{mn})\cdot e^{\epsilon-\frac{I\Delta^2}{2\sigma^2}}}{\sigma\sqrt{2\pi}}\cdot \sigma\sqrt{2\pi}\cdot  e^{\frac{I\Delta^2}{2\sigma^2}}\\
       &=(1-w_{mn})\cdot e^{\epsilon}<e^{\epsilon}.
    \end{aligned}
    \end{equation}
    \end{enumerate}
\end{proof}
From equation (\ref{10}) and (\ref{12}), although the noise reduction achieved by Algorithm \ref{alg:alg2} is difficult to quantify, the fact that it offers stronger privacy protection than Algorithm \ref{alg:alg1} is evident as the existence of the term $(1-w_{mn})$.

\section{Evaluations} \label{sec:Evaluations}
To provide a fair and comprehensive assessment of the proposed mechanisms, we conduct extensive experiments across various models and datasets. The results are compared with several existing mechanisms, offering detailed insights into their relative performance.

\subsection{Setup}
\subsubsection{Datasets and Tasks}
In this study, we carefully select representative learning tasks from diverse domains, including computer vision, text mining, and data mining, where the datasets are likely to be sensitive. Given the confidentiality and proprietary nature of the data, it is more secure for data owners if this information is perturbed before being transmitted to the untrusted central server. For example, raw data or features extracted from raw data could contain sensitive information (e.g., personal images or house numbers), and this information should be carefully preserved before being transmitted to third parties. Another example is that the model parameters that are trained on sensitive data can also be sensitive, and the deep learning models computed on such data should not reflect it.

For the computer vision task, we select MNIST~\cite{lecun1998gradient}, CIFAR-10~\cite{krizhevsky2009learning}, and SVHN~\cite{netzer2011reading} as the datasets. MobileNet (pre-trained on ImageNet) \footnote{\url{https://pytorch.org/vision/main/models/generated/torchvision.models.mobilenet_v2.html#torchvision.models.MobileNet_V2_Weights}} is used for the computer vision task. The model is trained with the Adam optimizer with a learning rate of 0.001. It uses the cross-entropy loss and has a batch size of 64. We train it for 10 epochs. The MNIST dataset consists of 60,000 training samples and 10,000 test samples of 28$\times$28 pixel grayscale images of handwritten digits. CIFAR-10 contains 50,000 training samples and 10,000 test samples and has 10 classes of common objects, e.g., birds, airplanes, and cats. SVHN includes door number images collected from Google Street View, and it has 73,257 training images and 26,032 testing images.

In the context of data mining, we employ the DarkNet dataset~\cite{habibi2020didarknet}, which is designed to detect and characterize VPN and Tor applications. It combines two public datasets from the Canadian Institute for Cybersecurity (CIC), namely ISCXTor2016 and ISCXVPN2016. The Tor data is divided into 200 segments, which are then interspersed with non-Tor data. The final 20\% of the combined dataset is used for testing, with the remaining 80\% used for training, i.e., 113,184 training samples and 28,297 test samples. For this task, a Fully Connected Network (FCN) is employed, which consists of four hidden layers, with the first three having a hidden size of 256 and the last layer having a hidden size of 128. A dropout rate of 0.3 is applied. The model is trained using the Adam optimizer with a learning rate of 0.001. It also uses the cross-entropy loss with a batch size of 64. We train it for 25 epochs.

For the text mining task, we work with the IMDB dataset~\cite{maas2011learning}, which consists of 50,000 movie reviews used for binary classification. Each review is tokenized using the spaCy tokenizer, and each token is converted to its corresponding vocabulary index using pre-trained GloVe embeddings, with each review truncated or padded to a fixed length of 500 tokens. Both the training and test sets consist of 25,000 samples each, and the data is directly imported from Torchtext. A BiLSTM model with two hidden layers, each with a hidden size of 256, is employed with a dropout rate of 0.3. The model is trained using the Adam optimizer with a learning rate of 0.001. The cross-entropy loss is utilized for model training with a batch size of 64, and we train the model for 50 epochs.

\subsubsection{Baselines and Metrics} 
We compare TLDP with other differential privacy mechanisms designed for high-dimensional data. The baselines contain: the Laplace mechanism (Laplace), Gaussian mechanism (Gaussian), Matrix-Variate Gaussian (MVG)~\cite{chanyaswad2018mvg}, Independent Directional Noise (IDN)~\cite{direction}, Layer‐Level Adaptive Gradient Perturbation (LLAGP)~\cite{LLAGP}, and Differential Privacy Adaptive Gradient Clipping  Method (DPHSGD)~\cite{DPHSGD}. 
We implement TLDP-L, TLDP-G, TLDP-L-w, and TLDP-G-w as solutions for applying local differential privacy on private data (see Section~\ref{sec:methodologies}), where TLDP-L and TLDP-G denote TLDP that uses the Laplace and Gaussian noise, while TLDP-L-w and TLDP-G-w represent TLDP when using Laplace or Gaussian noise while applying a weight matrix for improving utility.
Unlike previous studies where noise designed for scalar-valued queries is directly applied to tensors, we mathematically prove that the Laplace and Gaussian noise we add satisfies LDP for tensors (Definition~\ref{def2}), as detailed in Theorem~\ref{thm:important}. IDN, on the other hand, is a $(\epsilon, \delta)$-differential privacy mechanism for tensor-valued queries. In all experiments, the utility subspace $W$ is set to the identity matrix $E$. {LLAGP is designed for adjusting the privacy budget at the layer‐level during training progress, while DPHSGD is for adjusting the clipping threshold of training gradients. Tailored to large-scale data, both of the last two methods work perfectly for conducting comparative tests.}

In our experiments, we assess the performance of classification models using a weighted F1-score (i.e., F1-score in the following text) for fair comparison over various tasks, including tasks with imbalanced distributions.

The F1-score is calculated as follows:
\begin{equation}
        P_i = \frac{TP_i}{TP_i+FP_i}, \\  
\end{equation}
\begin{equation}
    R_i = \frac{TP_i}{TP_i+FN_i}, \\  
\end{equation}
where $TP$ (True Positives) means the number of instances where the model correctly predicts the positive class. $FP$ (False Positives) is the number of instances where the model incorrectly predicts the positive class (i.e., it predicts positive when the actual class is negative). $FN$ (False Negatives) is the number of instances where the model incorrectly predicts the negative class (i.e., it predicts negative when the actual class is positive).
\begin{equation}
    F1_i = 2 \times \frac{P_i \times R_i}{P_i+R_i}, \\  
\end{equation}
\begin{equation}
    \textit{F1-score} = \frac{\sum_{i=1}^n(w_i \times F1_i)}{\sum_{i=1}^n w_i}, \\  
\end{equation}
where $w_i = \frac{N_c}{N}$. Here, $N_c$ and $N$ denote the number of samples of class $c$ and the total number of samples from all the classes, respectively.

\subsubsection{Privacy-Preserving Targets} 
In our experiments, we simulate the distributed computation system using three users and one central server. The users compute their local models using their private dataset and transmit the trained model parameters to the central server. The central server aggregates the received model parameters, computes an average of them, and sends it back to all the users. We denote the models trained with each user's dataset as local models and the aggregated model on the central server as the global model. Here, we consider three typical tasks in the distributed computation system, i.e., original data, training features, and model parameters. For the first type, users simply transmit their private data to the server, leveraging the enormous computing resources of the server to process and analyze the data. The second type focuses on features extracted from private datasets. This scenario frequently arises in distributed model training across multiple parties, where users process raw data to extract features and transmit the extracted feature embeddings instead of raw data to the central server for model training to avoid privacy leakage. The query mechanism outputs intermediate data/features with differential privacy guarantees, which are then utilized for subsequent training processes on the server side. The third type emphasizes the protection of model parameters. Sharing model parameters is very common in distributed systems, especially when federated learning \cite{mcmahan2017communication} has evolved for model computation without the need for direct data or feature sharing between data owners and a parameter server. However, research~\cite{geiping2020inverting} has found that attackers could easily infer the original data used to train the model from intercepted model parameters. Given that the trained model parameters are sensitive, the query function returns a differentially private version of them before transmitting.
We present the implementation details categorized by type as follows.

\subsection{Implementation Details} \label{subsec:ImplementationDetails}
\subsubsection{Type I: Private Testing Data}
The datasets used in Type I include MNIST, CIFAR-10, and SVHN. The setup is illustrated in Table~\ref{tab:table2}. In this type, the training data is likely to be private and should be preserved. Users manipulate their respective data and send it directly to a central server, which uses this information to train the model parameters before transmitting the updated parameters back. No preprocessing or size-trimming is applied to the raw data, ensuring that the accuracy of the experimental results is maintained at the highest possible level.

Additionally, for the MNIST dataset, we include comparisons with the TLDP-L-w and TLDP-G-w methods. In these approaches, the weights are determined based on the grayscale values of the images: higher grayscale values correspond to higher weights. This setting is reasonable since, as the dataset has been preprocessed, apart from the digit regions with values greater than 0, all other parts are set to 0.
        
\textit{Query function:} The query function we use is the identity function $f(\mathcal{X})=\mathcal{X}$, where $\mathcal{X}$ represents the original tensors from users. 
% Since TVG is a DP mechanism, we need to define 
The $l_2$ sensitivity for neighboring datasets $\{\mathcal{X},\mathcal{X}'\} $ is defined as:
\begin{equation}
    s_2(f)=\underset{\mathcal{X},\mathcal{X}'}{sup}||\mathcal{X}-\mathcal{X}'||_F=2\sqrt{I_1I_2\dots I_N},
\end{equation}
since the range of data is set to $(-1,1)$.

\begin{table}[h]
\caption{Datasets and parameter settings for Type I}\label{tab:table2}
\centering
\resizebox{0.8\columnwidth}{!}{
% \begin{tabular}{ c c c c }
% \hline
% Dataset & MNIST&CIFAR-10&SVHN\\
% \hline
% Model&MobileNet&MobileNet&MobileNet\\
% \hline
% Training Data&60000&50000&73257\\
% \hline
% Testing Data&10000&10000&26032\\
% \hline
% Batch size&64&64&64\\
% \hline
% $\Delta$ & 256 &256&256\\
% \hline
% \end{tabular}

\begin{tabular}{ c c m{1cm} m{1cm} m{1cm} c }
\hline
Dataset &Model&Training Data&Testing Data&Batch size&$\Delta$\\
\hline
MNIST&MobileNet&60000&10000&64&256\\
CIFAR-10&MobileNet&50000&10000&64&256\\
SVHN&MobileNet&73257&26032&64&256\\
\hline
\end{tabular}

}
\end{table}

\subsubsection{Type II: Private Training Features}
For Type II, we select MNIST, CIFAR-10, and SVHN as experimental datasets. The setup is shown in Table~\ref{tab:table3}. DarkNet and IMDB are not used in this task due to the absence of pre-trained models, and without them, the extracted feature embeddings are prone to being less meaningful. We consider the features extracted from the original training data of these datasets to be confidential and are committed to protecting privacy. To enhance model performance, we substitute the activation function in each layer of MobileNet from ReLU6 to tanh$(\cdot)$, ensuring that the output features are normalized within the range of $(-1, 1)$. MobileNet is divided into two parts: the feature extractor (i.e., all layers except the final fully connected layer) and the classifier (i.e., the final fully connected layer). After extracting features from the feature extractor, noise is added to the features, and only the classifier is trained and tested.  

\textit{Query function:} We use the identity function $f(\mathcal{X})=\mathcal{X}$ as the query function, and $\mathcal{X}$ represent features extracted from the original data. In TVG, for neighboring datasets $\{\mathcal{X},\mathcal{X}'\} $, the $l_2$ sensitivity is:
\begin{equation}
    s_2(f)=\underset{\mathcal{X},\mathcal{X}'}{sup}||\mathcal{X}-\mathcal{X}'||_F=2\sqrt{I_1I_2\dots I_N},
\end{equation}
since the range of data is set to $(-1,1)$.
\begin{table}[h]
\caption{Datasets and parameter settings for Type II \label{tab:table3}}
\centering
\resizebox{0.8\columnwidth}{!}{
% \begin{tabular}{|c|c|c|c|}
% \hline
% Dataset& MNIST&CIFAR-10&SVHN\\
% \hline
% Model&MobileNet&MobileNet&MobileNet\\
% \hline
% Training Data&60000&50000&73257\\
% \hline
% Testing Data&10000&10000&26032\\
% \hline
% Batch size&64&64&64\\
% \hline
% $\Delta$ & 256 &256&256\\
% \hline
% \end{tabular}
\begin{tabular}{ c c m{1cm} m{1cm} m{1cm} c }
\hline
Dataset &Model&Training Data&Testing Data&Batch size&$\Delta$\\
\hline
MNIST&MobileNet&60000&10000&64&256\\
CIFAR-10&MobileNet&50000&10000&64&256\\
SVHN&MobileNet&73257&26032&64&256\\
\hline
\end{tabular}

}
\end{table}

\subsubsection{Type III: Private SGD}
For this task, we utilized datasets including MNIST, CIFAR-10, SVHN, IMDB, and DarkNet, as shown in Table~\ref{tab:table4}. The training data is split into three parts, and each local model is trained on one subset. After completing a single epoch, each local model uploads its weights and biases using the global model. The global model then updates its parameters by averaging the weights and biases from the three local models and subsequently sends the updated parameters back to the local models. As a result, after each iteration, the weights and biases of all four models are synchronized. The entire process can be restated as follows:
\begin{itemize}
    \item Train each local model on its subset. 
    \item Clip the model parameters by their $L_\infty$ norm, with a clipping threshold of $C$.
    \item Calculate the mean of the model parameters across a batch of subsets and introduce noise to these averaged model parameters to ensure privacy. % TODO
    \item Update the model parameters using these noise-adjusted model parameters and repeat the process starting from step 1.
\end{itemize}

\textit{Query function:} 
The query function we use is the identity function $f(\mathcal{X})=\mathcal{X}$, 
and $\mathcal{X}$ represents the model parameters trained by users. 
Similarly, we define $l_2$ sensitivity as, for neighboring datasets 
$\{\mathcal{X},\mathcal{X}'\}:$
\begin{equation}
    s_2(f)=\underset{\mathcal{X},\mathcal{X}'}{sup}||\mathcal{X}-\mathcal{X}'||_F=2C\sqrt{I_1I_2\dots I_N}
\end{equation}
where C is the clip value mentioned above.

\begin{table}[h]
\caption{Datasets and parameter settings for Type III}
\label{tab:table4}
\centering
\resizebox{\linewidth}{!}{
% \begin{tabular}{|c|c|c|c|c|c|}
% \hline
% Dataset & MNIST&CIFAR-10&SVHN&IMDB&DarkNet\\
% \hline
% Model&MobileNet&MobileNet&MobileNet&BiLSTM&MLP\\
% \hline
% Training Data&60000&50000&73257&25000&113184\\
% \hline
% Testing Data&10000&10000&26032&25000&28297\\
% \hline
% Batch size&64&64&64&64&64\\
% \hline
% Clip value&1&1&1&1&1\\
% \hline
% $\Delta$ & 256 &256&256&400001&256\\
% \hline
% \end{tabular}
\begin{tabular}{ c c m{1cm} m{1cm} m{1cm} m{1cm} c}
\hline
Dataset &Model&Training Data&Testing Data&Batch size& Clip value&$\Delta$\\
\hline
MNIST&MobileNet&60000&10000&64&1&256\\
CIFAR-10&MobileNet&50000&10000&64&1&256\\
SVHN&MobileNet&73257&26032&64&1&256\\
IMDB&BiLSTM&25000&25000&64&1&400001\\
DarkNet&MLP&113184&28297&64&1&256\\
\hline
\end{tabular}
}
\end{table}

\subsection{Experimental Results}
Before comparing the experimental outcomes, we provide a theoretical analysis of the expected error associated with each (local) differential privacy approach. The details of these theoretical findings are summarized in Table~\ref{tab:table5}. 
\begin{table*}[t]
\caption{COMPARISON OF EXPECTED ERROR} \label{tab:table5}
\centering
\begin{tabular}{c c c c c c c c}
\hline
 Method & Laplace & Gaussian & MVG & IDN &DPHSGD & TLDP-L & TLDP-G\\
\hline
$\mathbb{E}||\mathcal{Z}||_2$ & 
$\frac{\Delta}{\epsilon}I\sqrt{2I}$ & $\frac{\Delta}{2\epsilon}I\sqrt{I}$ & %$\frac{I_1 I}{(-\beta_0 + \sqrt{\beta_0^2+8\alpha_0\epsilon})^2}\cdot4 \alpha_0^2$ &  
$\frac{\Delta I_1 I \sqrt{I}}{\sqrt{2}\epsilon}\cdot \ln^2(I_1+1)$&
%$\sqrt{\frac{Is_2^2(f)}{(-\zeta(\delta)+\sqrt{\zeta^2(\delta)+2\epsilon})^2}}$ & 
$\frac{\Delta I\sqrt{I}}{\sqrt{2}\epsilon}$&
$\frac{C\Delta}{\epsilon}I\sqrt{2I}$&
$\frac{\Delta}{\epsilon}\sqrt{2(1-p)I}$ & 
$\frac{\Delta}{2\epsilon}\sqrt{(1-p)I}$\\
\hline
\end{tabular}
\end{table*}

In Table~\ref{tab:table5}, $I=I_1I_2\dots I_N$ is a product of each dimension of the tensor. Since the query functions for the three tasks outlined in Section~\ref{subsec:ImplementationDetails} are identical, the expected errors are consistent, which is defined as 
\begin{equation}
    \underset{\mathcal{Z}\in\mathbb{R}^{I_1I_2\dots I_N}}{sup}\mathbb{E}||\mathcal{Z}||_2=\underset{\mathcal{Z}\in\mathbb{R}^{I_1I_2\dots I_N}}{sup}\mathbb{E}(\sum_{i=1}^I(\mathcal{Z}_i)^2)^\frac{1}{2},
\end{equation}
where $\mathcal{Z}_i (i=1,2,\dots,N)$ denote the components of $\mathcal{Z}$. Using the properties of variance, we can derive that for TLDP-L noise, we have
\begin{equation}
    \begin{aligned}
        [\mathbb{E}(\sum_{i=1}^I&(\mathcal{Z}_i)^2)^\frac{1}{2}]^2\leq\mathbb{E}[(\sum_{i=1}^I(\mathcal{Z}_i)^2)^\frac{1}{2}]^2=\mathbb{E}(\sum_{i=1}^n(\mathcal{Z}_i)^2)\\
        &=\sum_{i=1}^I\mathbb{E}(\mathcal{Z}_i)^2=\sum_{i=1}^I(1-p)2b^2=I(1-p)2b^2\\
        &=\frac{2\Delta^2}{\epsilon^2}\cdot (1-p)I,
    \end{aligned}
\end{equation}
since for noise following $Lap(0,b)$, its expectation is 0 and variance is $2b^2$. So
\begin{equation}
    \underset{\mathcal{Z}\in\mathbb{R}^{I_1I_2\dots I_N}}{sup}\mathbb{E}||\mathcal{Z}||_2=\frac{\Delta}{\epsilon}\sqrt{2(1-p)I}.
\end{equation}
For TLDP-G noise,
\begin{equation*}
    \begin{aligned}
        [\mathbb{E}(\sum_{i=1}^I&(\mathcal{Z}_i)^2)^\frac{1}{2}]^2\leq\mathbb{E}[(\sum_{i=1}^I(\mathcal{Z}_i)^2)^\frac{1}{2}]^2=\mathbb{E}(\sum_{i=1}^n(\mathcal{Z}_i)^2)\\
        &=\sum_{i=1}^I\mathbb{E}(\mathcal{Z}_i)^2=\sum_{i=1}^I(1-p)\sigma^2=I(1-p)\sigma^2\\
        &=\frac{\Delta}{4\epsilon^2}(1-p)I,
    \end{aligned}
\end{equation*}
since for noise following $\mathcal{N}(0,\sigma^2)$, its expectation is 0 and variance is $\sigma^2$. So
\begin{equation}
    \underset{\mathcal{Z}\in\mathbb{R}^{I_1I_2\dots I_N}}{sup}\mathbb{E}||\mathcal{Z}||_2=\frac{\Delta}{2\epsilon}\sqrt{(1-p)I}.
\end{equation}

If the randomized response mechanism is removed from our method, i.e., no component will be retained with the probability $p$, and the term $1-p$ in the outcomes will become $1$. Moreover, as discussed in the proof of Theorem~\ref{thm:important}, in order to ensure the same privacy budget, the values of the two noise parameters are increased by a factor of $I$ compared to the improved methods. As a result, the expected errors of normal Laplace and Gaussian mechanisms are $\frac{\Delta}{\epsilon}I\sqrt{2I}$ and $\frac{\Delta}{2\epsilon}I\sqrt{I}$, respectively.

For MVG and IDN, we adopt the same parameters as Task II in~\cite{chanyaswad2018mvg} and~\cite{direction}, which provides the expected error $$\mathbb{E}_{MVG}||\mathcal{Z}||_2=\sqrt{\frac{I_1 I}{(-\beta_0 + \sqrt{\beta_0^2+8\alpha_0\epsilon})^2}\cdot4 \alpha_0^2}$$ and $$\mathbb{E}_{IDN}||\mathcal{Z}||_2=\sqrt{\frac{Is_2^2(f)}{(-\zeta(\delta)+\sqrt{\zeta^2(\delta)+2\epsilon})^2}},$$ where $\zeta^2(\delta)=-2\ln{\delta}+2\sqrt{-I\ln{\delta}}+I$. Removing the relaxation term $\delta$, we obtain its estimated value $\frac{\Delta I_1 I \sqrt{I}}{\sqrt{2}\epsilon}\cdot \ln^2(I_1+1)$ and $\frac{\Delta I\sqrt{I}}{\sqrt{2}\epsilon}$.  
Compared with the Gaussian mechanism, DPHSGD just adjusts the clipping threshold of parameters essentially, thus the difference between their expected errors is only a finite constant $C$. However, the expected error of LLAGP cannot be compared with other methods since the noise is injected into hidden layers instead of the tensors directly.

A comparison of the results in Table \ref{tab:table5} shows that whether Laplacian noise or Gaussian noise is injected, the expected error of TLDP is $O(1/I)$, which is consistent with the conclusions of Theorem \ref{thm:important}.

Next, we present the experimental results for privacy algorithms and assess their consistency with the theoretical error. 

\begin{figure*}[t]
\centering
\includegraphics[width=\textwidth]{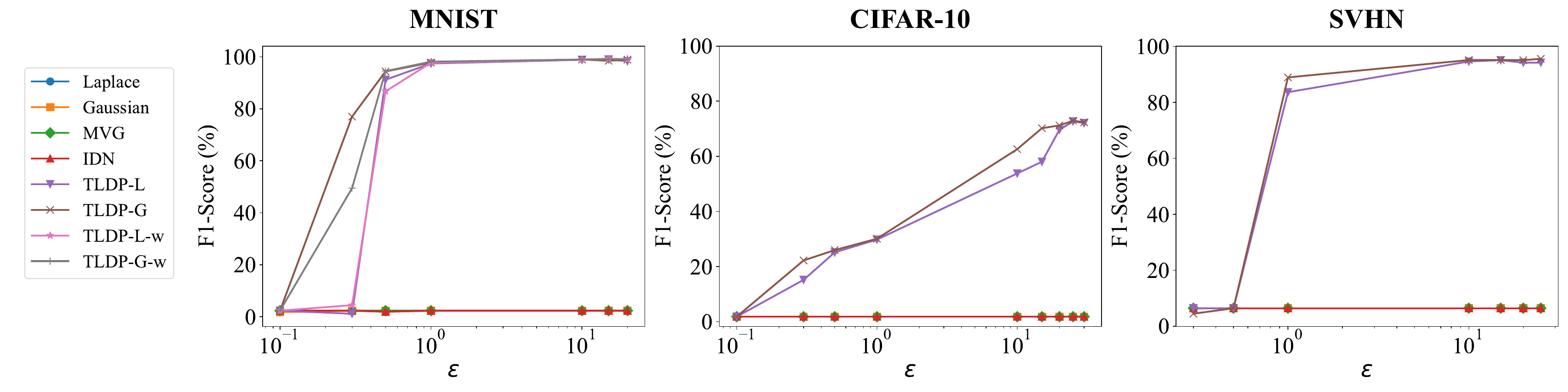}
\caption{Type I: Private Testing Data. The TLDP-G mechanism demonstrates a slight advantage over the TLDP-L mechanism, with both surpassing the other baselines by a significant margin. This is especially evident in the MNIST and SVHN datasets, where the F1-score of TLDP rapidly increases to a high level.} \label{fig:type1}
\end{figure*} 

\begin{figure*}[t]
\centering
\includegraphics[width=\textwidth]{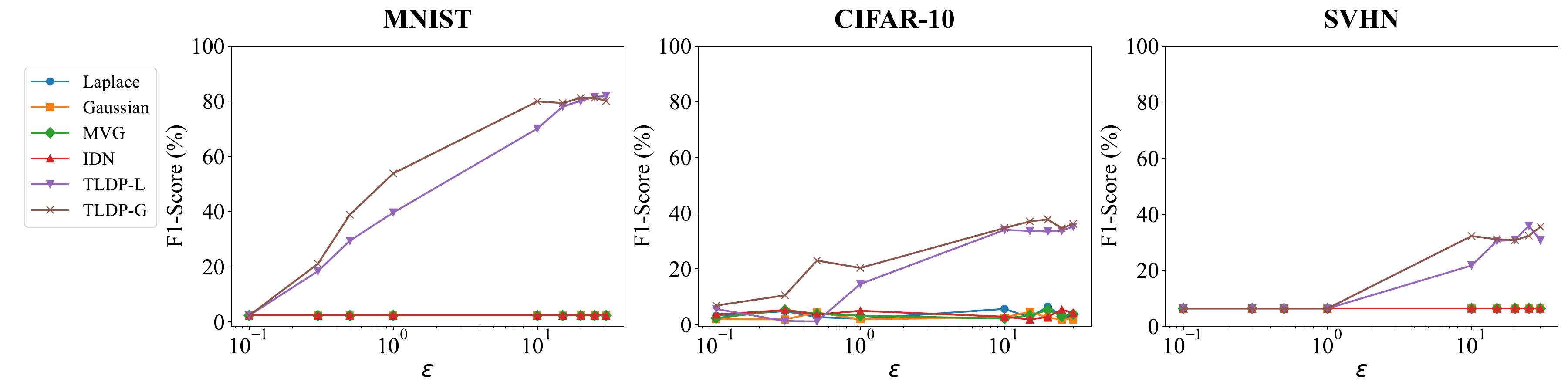}
\caption{Type II: Private Training Features. Both TLDP methods consistently outperform the other approaches, which are largely ineffective at completing any meaningful recognition tasks.} \label{fig:type2}
\end{figure*}

\begin{figure*}[t]
\centering
\includegraphics[width=\textwidth]{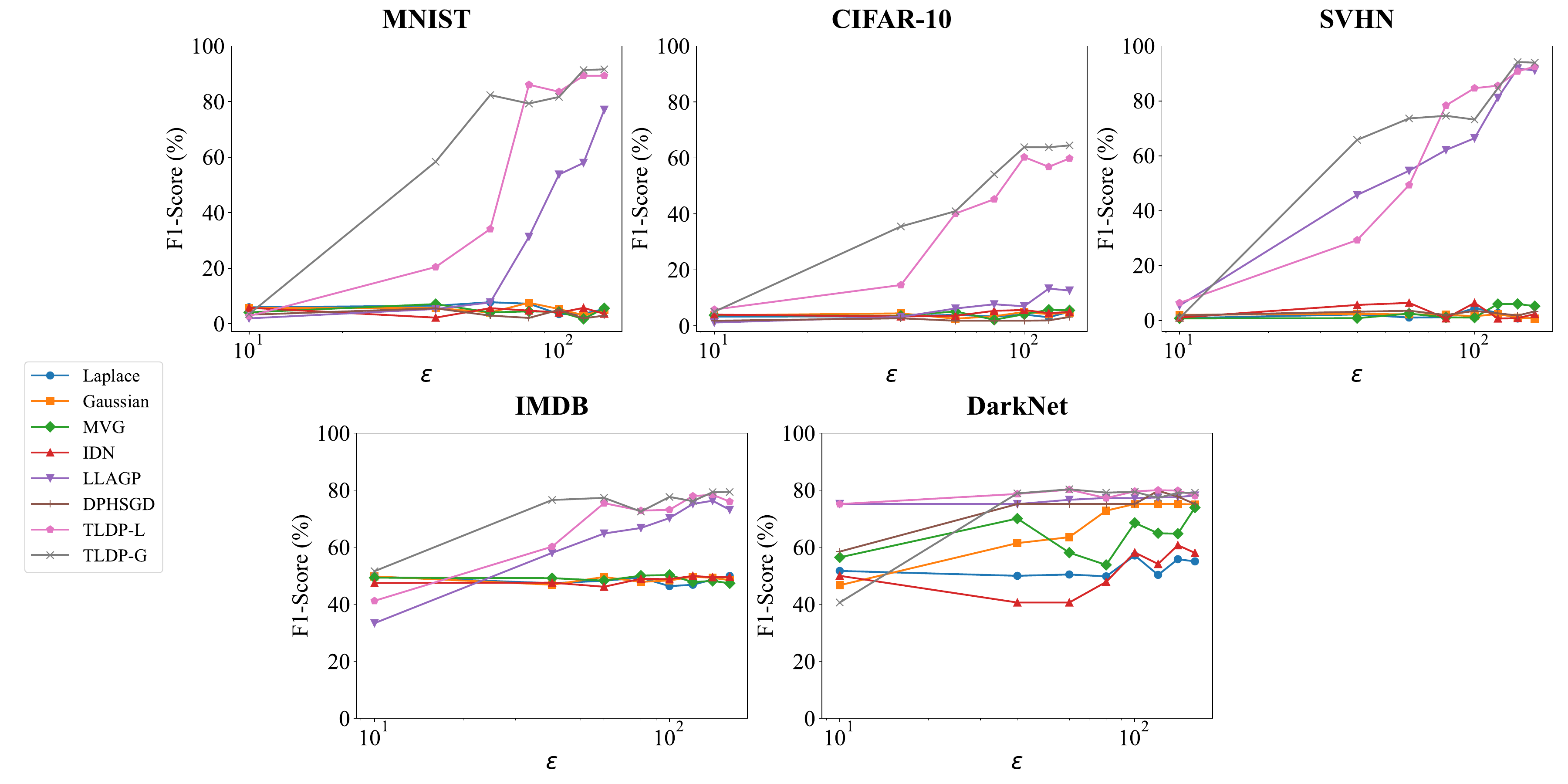}
\caption{Type III: Private Stochastic Gradient Descent. The TLDP methods continue to demonstrate unparalleled advantages. However, in certain datasets, as $\epsilon$ increases, the performance of TLDP-L temporarily surpasses that of TLDP-G. Ultimately, the performance of both methods tends to converge.} \label{fig:type3}
\end{figure*}

\subsubsection{Type I Results} 
Fig.~\ref{fig:type1} summarizes the results of Type I across various computer vision datasets, i.e., MNIST, CIFAR-10, and SVHN. It demonstrates that in computer vision tasks, our TLDP algorithm can achieve much higher classification accuracy than the comparative methods because the latter introduces excessive noise to the tensor data, making accurate identification nearly impossible.

Fig.~\ref{fig:type1} reports F1-score under different $\epsilon$, and the relaxation term $\delta$ is fixed as $10^{-5}$, similar to~\cite{direction}. The experimental results across three datasets clearly demonstrate that as $\epsilon$ increases, the F1-scores of the Laplace, Gaussian, MVG, and IDN mechanisms remain consistently low, indicating their inability to effectively complete the task. In contrast, the F1-score of the TLDP method rises sharply. On the MNIST and SVHN datasets, the score reaches approximately 90\% when $\epsilon$ is below 1. Although the growth is slower on the CIFAR-10 dataset, it eventually stabilizes around 70\%. For the MNIST dataset, we also included the TLDP-L-w and TLDP-G-w methods. While the performance of the former is nearly identical to that of TLDP-L, the latter slightly underperforms TLDP-G, which is consistent with our earlier theoretical analysis. The addition of a weight matrix ensures better privacy protection under the same differential privacy budget at the cost of slightly reduced accuracy. Furthermore, the TLDP-G mechanism outperforms the TLDP-L mechanism, which is in line with our previous theoretical error analysis.

\subsubsection{Type II Results} 
The results of Type II for each dataset are presented in Fig.~\ref{fig:type2}. Due to the fact that the central server now operates on extracted feature tensors instead of raw data in the training set, the performance metrics of the TLDP method experience a decline. 
However, despite this reduction, the TLDP method still significantly outperforms the Laplace, Gaussian, MVG, and IDN mechanisms. 

Fig.~\ref{fig:type2} presents the F1-score across different values of $\epsilon$, where the relaxation term $\delta$ is set to $10^{-5}$. Compared to Type I, the rate of increase in F1-score across the three experimental results slows down as $\epsilon$ increases in Type II. For the MNIST dataset, the F1-score reaches a maximum of about 80\%, while in CIFAR-10 and SVHN, the increase is much more limited, with F1-scores hovering around 40\% finally. Unlike in Type I, the performance of the Laplace, Gaussian, MVG, and IDN mechanisms in Type II shows occasional fluctuations in F1-scores, although they still remain at extremely low levels, indicating their ineffectiveness for classification tasks. Additionally, consistent with the theoretical error analysis presented earlier, the TLDP-G mechanism generally outperforms the TLDP-L mechanism in these experiments.

\subsubsection{Type III Results} 
Fig.~\ref{fig:type3} shows the results of Type III on the different datasets for comparison. Two new datasets, IMDB for text mining and DarkNet for data mining, and two new comparative methods, LLAGP and DPHSGD, are introduced in the experiment. In Type III, due to the transmission of more abstract model parameters to the central server, the results from the various methods only become comparable when $\epsilon$ exceeds 100. As anticipated, the performance of the Laplace, Gaussian, MVG, and IDN mechanisms remains significantly low. {LLAGP outperforms all other comparative methods across most scenarios except on the CIFAR-10 dataset, but still fails to surpass the method proposed in this paper, while DPHSGD only delivers good performance on the Darknet dataset and performs poorly in all other cases. Overall, the TLDP method proposed in this paper achieves the best comprehensive performance.} %In contrast, the TLDP method consistently delivers superior results, particularly on the MNIST, SVHN, IMDB, and DarkNet datasets.

Fig.~\ref{fig:type3} displays the F-1 score for varying $\epsilon$ values, with the relaxation term $\delta$ fixed as $10^{-5}$. In Type III, the results for the MNIST, CIFAR-10, and SVHN datasets are consistent with those in Type I and Type II, with TLDP significantly outperforming the Laplace, Gaussian, MVG, IDN, and DPHSGD mechanisms. LLAGP performs best on the SVHN dataset, and can achieve accuracy close to that of TLDP when $\epsilon$ is sufficiently large; however, it performs worst on the CIFAR-10 dataset, with results that are not significantly different from those of other comparative experiments. On the IMDB dataset, the accuracy of the Laplace, Gaussian, MVG, and IDN mechanisms improves to around 40\%, but no further improvement is observed as $\epsilon$ increases. In contrast, both TLDP methods consistently achieve accuracy levels of around 80\%. LLAGP also performs well on this dataset, keeping pace with TLDP methods. On the DarkNet dataset, the performance of the different methods is more distinctly differentiated. The IDN and Laplace mechanisms show the poorest performance, followed by the Gaussian and MVG mechanisms. Although the TLDP method continues to demonstrate an advantage, the gap with other methods is less pronounced compared to the former experiments—especially since the accuracy of LLAGP and DPHSGD has ultimately become sufficiently close to that of TLDP. This is especially true for the TLDP-G mechanism, which initially exhibits the lowest accuracy when $\epsilon$ is set to 10 but eventually surpasses all other methods, aligning with the TLDP-L curve.

We find that the performance of IDN is not as good as claimed in~\cite{direction}, primarily due to its high reliance on its pre-defined utility subspace. When there is a lack of known conditions, its performance is significantly compromised. Most comparative methods (i.e., Gaplace, Gaussian, MVG, IDN) perform poorly in experiments, mostly because the amount of noise they inject is $O(I)$ times that of the TLDP method (where $I=784$ for the MNIST dataset and $I=1024$ for CIFAR-10 and SVHN datasets), and the larger value of $I$ results in the poorer performance of the comparative methods.
% All datasets used in our experiments are commonly employed in machine learning tasks. 
% Additionally, DPHSGD reduces the injected noise caused by oversized tensor dimensions by applying a clipping threshold on the training gradients using the diagonal Hessian matrix, which allows DPHSGD to achieve significantly higher utility than that of Gaussian, Laploce in some scenarios. 
% While the parameter is the only difference between DPHSGD and Gaussian mechanisms, its sufficiently small value allows the former to achieve significantly higher utility than that of the latter in some scenarios. 
It is worth noting that LLAGP performs better than Laplace, Gaussian, MVG, IDN, and DPHSGD; the underlying reason behind this could be that it selectively allocates less privacy budget to the hidden layers close to the output layer and more privacy budget to those that are close to the input layer to avoid privacy leakage (under its assumption that these hidden layers face a higher privacy leakage risk compared with the rest of the layers) instead of treat all hidden layers equally.  
% so that it presents better performance compared to other methods that treat all hidden layers equally.
% as the parameter capacity increases during the neural network models' training progress, presenting better performance compared to other methods that treat all hidden layers equally.

The results from all the aforementioned experiments provide strong evidence of the robustness of the TLDP mechanism across diverse experimental settings. Regardless of the dataset or the tasks, TLDP consistently outperforms other baseline methods, showcasing its ability to adapt effectively to different environments. This robustness highlights TLDP's potential as a reliable solution for preserving privacy in distributed computing systems while maintaining high model performance, making it a promising candidate for a wide range of distributed applications in real-world scenarios.

\section{Conclusion}
In this paper, we introduce TLDP, a novel algorithm for implementing LDP on tensor data, which enhances traditional Laplace and Gaussian differential privacy mechanisms. TLDP is built upon the randomized response technique, where the original data is retained with a certain probability, and noise is added otherwise. This approach replaces the direct interference of noise with randomness, effectively addressing the issue of excessive noise when applying LDP to high-dimensional data. We provide a unified definition of localized differential privacy tailored to tensor data and rigorously prove that the TLDP mechanism satisfies this definition. To further improve the utility, we incorporate a weight matrix into the algorithm, reducing the amount of noise and thereby achieving higher accuracy under the same privacy budget. Finally, we compare our method with several existing privacy-preserving algorithms, both in terms of theoretical error analysis and empirical evaluations conducted on privacy-sensitive datasets with identity query functions. The results from both the theoretical and experimental evaluations demonstrate that TLDP consistently outperforms existing approaches in terms of accuracy and overall performance.

\section*{Acknowledgments}
% This work was supported in part by the National Natural Science Foundation of China (Grant No. 62406215 and 62072321).
This work was supported in part by the National Natural Science Foundation of China (62406215) and the Science and Technology Program of Jiangsu Province (BZ2024062).
% the Natural Science Foundation of the Jiangsu Higher Education Institutions of China  (22KJA520007), and the Suzhou Planning Project of Science and Technology (2023ss03).

% \bibliographystyle{elsarticle-num}
\bibliographystyle{elsarticle-num}
\bibliography{main1}

\end{document}